\newtheorem{subproblem}{SubProblem}
\definecolor{armygreen}{rgb}{0.29, 0.33, 0.13}
\title{City Guarding with Limited Field of View}
\author{Ovidiu Daescu \thanks{The University of Texas at Dallas, {\tt daescu@utdallas.edu}}
        \and
        Hemant Malik\thanks{The University of Texas at Dallas, {\tt  malik@utdallas.edu}}}
\begin{document}
\thispagestyle{empty}
\maketitle

\begin{abstract}
Drones and other small unmanned aerial vehicles are starting to get permission to fly within city limits. 
While video cameras are easily available in most cities, their purpose is to guard the streets at ground level.
Guarding the aerial space of a city with video cameras is a problem that so far has been largely ignored. 

In this paper, we present bounds on the number of cameras needed to guard a city’s aerial space (roofs, walls, and ground) using cameras with 180$^{\circ}$ range of vision (the region in front of the guard), which is common for most commercial cameras.
We assume all buildings are vertical and have a rectangular base. Each camera is placed at a top corner of a building.

We considered the following two versions: (i) buildings have an axis-aligned ground base and, (ii) buildings have an arbitrary orientation. 
We give necessary and sufficient results for (i), necessary results for (ii), and conjecture sufficiency results for (ii).  
Specifically, for (i) we prove a sufficiency bound of $2k + \lfloor \frac{k}{4} \rfloor + 4$ on the number of vertex guards, while for (ii) we show that $3k+1$ vertex guards are sometimes necessary, where $k$ is total number of buildings in the city. 


\end{abstract}

\section{Introduction}

Drones and other small unmanned aerial vehicles (UAVs) are already allowed to experimentally fly within 
city limits. For example, in August 2019, Uber announced it has selected the city of Dallas to experiment with flying drones and small UAVs, within the city. Monitoring the aerial space of big cities is thus becoming a critical problem that yet has to be addressed. Video cameras are easily available in most cities, but their purpose is to guard the streets at ground level. Guarding the aerial space of a city with cameras is a problem that has been largely ignored.    

City guarding is related to the famous \textit{art gallery problem}~\cite{klee1969every} and its many variations~\cite{variation} studied in the past few decades. In almost all these studies, the art gallery lies in the plane (2D), assuming a polygonal shape with or without holes. In the art gallery problem, the goal is to determine the minimum number of point guards sufficient to see every point of the interior of a simple polygon. A point \textit{q} is visible to guard \textit{g} if the line segment joining \textit{q} and \textit{g} lies completely within the polygon.
When the guards are restricted to vertices of the polygon only, they are referred to as vertex guards. 

In the orthogonal art gallery problem, all edges of the polygon are either horizontal or vertical. In some versions of the art gallery problem, the polygon is allowed to have $h$ holes. When guarding such polygons, it is allowed to place the guards at the vertices of the enclosing polygon and the vertices of the holes.  

For guarding an \textit{orthogonal polyhedron} point guards are less effective. There exist examples of polyhedra with \textit{n} vertices where guards placed at every vertex do not cover the whole interior of the polyhedra; instead $O(n^{3/2})$ non-vertex guards are required~\cite{artgallery}. 

The problem is also related to the following problem~\cite{blanco1994illuminating}: Given $k$ pairwise disjoint isothetic rectangles
in a plane, place vertex guards on rectangles such that every point in free space (plane area excluding the interior of the quadrilaterals) is visible to at least one guard. 

The city guarding problem was introduced in~\cite{cityguarding}, and is a 2.5D variant of the 2D orthogonal art gallery with holes. The input consists of $k$ buildings, within an area bounded by an axis-parallel rectangle (this can be relaxed to the whole plane, assuming cameras have unlimited distance visibility), with each building being vertical and having an axis parallel rectangular base (a vertical rectangular prism), and the goal is to place the minimum number of guards that can see in any direction (referred as 360$^{\circ}$ field of vision), at the top corners (vertices) of some buildings, to guard the aerial space of the city. The height of a building is a strictly positive real number. In~\cite{cityguarding}, they consider three variations of city guarding: (i) \textit{Roof Guarding:} determine the minimum number of vertex guards required to guard the roofs (ii) \textit{Ground and Wall Guarding:} determine the minimum number of vertex guards necessary to guard the ground and the walls, and (iii) \textit{City Guarding:} determine the minimum number of vertex guards required to guard the (aerial space of the) city, which means the roofs, walls, and the ground. As with the 2D art gallery problem, the 2.5D city guarding problems are NP-hard and, by a simple reduction, so are the corresponding versions studied in this paper. 

We consider the three variations of the city guarding problem with a restriction on the visibility range of the guards. Specifically, a guard is only able to see the region in front of it, i.e., the range of vision of a guard is bounded by 180$^{\circ}$, instead of the 360$^{\circ}$ in~\cite{cityguarding}. This corresponds to the capabilities of most commercial cameras.

In all our proofs each guard is placed at the top corner of a building and is oriented such that the seen and unseen regions of the guard are separated
by a vertical plane parallel with one of the sides of the building where the camera is placed. Thus, when building bases are isothetic (axis-aligned) rectangles, a camera will face in one of four directions: East, West, North, or South (E, W, N, S). From now on, we assume cameras are placed as stated here, unless otherwise specified, and may omit mentioning camera orientation 
throughout the paper. 

The two versions we consider are: (A) Buildings have an axis-aligned rectangular base (isothetic rectangles), (B) Buildings have a (arbitrary oriented) rectangular base. To solve the two versions, we address the following variations of the art-gallery problem:

\noindent (V1) Given an axis-aligned rectangle $P$ with $k$ disjoint axis-aligned rectangular holes, place vertex guards on hole boundaries such that every point inside $P$ is visible to at least one guard, where the range of vision of a guard is $180^{\circ}$.

\noindent (V2) Given an axis-aligned rectangle $P$ with $k$ disjoint (arbitrary oriented) rectangular holes, place vertex guards on hole boundaries such that every point inside $P$ is visible to at least one guard, where the range of vision of a guard is $180^{\circ}$.


For the first problem (V1), we prove a sufficiency bound of $2k + \lfloor \frac{k}{4} \rfloor + 4$ on the number of vertex guards. To obtain this bound, we provide a novel, divide and conquer algorithm. 
For the second problem (V2), we show that $3k+1$ vertex guards are sometimes necessary and conjecture that the bound is tight. 

A comparison of our sufficiency and necessity results with those in~\cite{cityguarding} is shown in Table~\ref{t:results}. Our solutions 
set an essential foundation for monitoring drones flying within city limits, using video cameras.


\begin{small}
\begin{table}[ht]
 
\begin{center}
\scalebox{0.62}{%
\begin{tabular}{|p{3cm}|p{3cm}|p{3cm}|p{3cm}|} \hline
  & ~\cite{cityguarding} Guard vision range: 360$^{\circ}$ (axis-aligned rectangle buildings)& Guard vision range: 180$^{\circ}$ (axis-aligned rectangle buildings) &Guard vision range: 180$^{\circ}$ (non-axis-aligned rectangle buildings) \\ \hline
  Roof Guarding & {\color{blue}$\lfloor \frac{2(k-1)}{3} \rfloor +1$} & {\color{blue}$k$} &
 {\color{blue}$k$}\\ \hline
    Ground and Wall Guarding & {\color{red}$k + \lfloor \frac{k}{4} \rfloor +1$} & 
{\color{red}$2k + \lfloor \frac{k}{4} \rfloor + 4$} & {\color{green}$3k+1$}\\ \hline
    City Guarding & {\color{red}$k + \lfloor \frac{k}{2} \rfloor +1$} & {\color{red}$2k + \lfloor \frac{k}{4} \rfloor + 4$} & {\color{green}$3k+1$}\\ \hline
  \end{tabular}}
\end{center}

   \caption{Sufficient and necessary results comparisons. A tight bound is shown in blue color, a sufficiency bound in red color and a necessary bound in green color.}
 \label{t:results}

\end{table}
\end{small}

\section{Related Work}
\label{RW}

\subsection{Simple Polygons Results}

Given a simple polygon $P$ in the plane, with $n$ vertices, Chvatal~\cite{chvatal} proved that $\lfloor n/3 \rfloor$ vertex guards are always sufficient and sometimes necessary to guard $P$. Chvatal’s proof was later simplified by Fisk~\cite{fisk1978short} using the existence of a three-coloring of a triangulated polygon. 

When the view of the guard is limited to 180$^{\circ}$, Toth~\cite{toth2000art} showed that $\lfloor \frac{n}{3} \rfloor$ point guards are always sufficient to cover the interior of $P$ (thus, moving from 360 to 180 range of vision keeps the same sufficiency number). 
F. Santos conjecture that $\lfloor \frac{3n-3}{5} \rfloor \pi$ vertex guards are always sufficient and occasionally necessary to cover any polygon with n vertices. Later in 2002, Toth~\cite{toth2002art} provided a lower bound on the number of point guards when the range of vision $\alpha$ is less than 180$^{\circ}$. When $\alpha < 180^{\circ}$, there exist a polygon $P$ that cannot be guarded by $\frac{2n}{3} - 3$  guards. For $\alpha < 90^{\circ}$ there exist $P$ that cannot be guarded by $\frac{3n}{4} - 1$ guards, and for $\alpha < 60^{\circ}$ there exist $P$ where the number of guards needed to cover $P$ is at least $\lfloor \frac{60}{\alpha} \rfloor \frac{(n-1)}{2}$. 

\subsection{Orthogonal Polygons Results}

In 1983, Kahn et al.~\cite{kahn1983traditional} showed that if every pair of adjacent sides of the polygon form a right angle, then $\lfloor \frac{n}{4} \rfloor$ vertex guards are occasionally necessary and always sufficient to guard a polygon with $n$ vertices.

In 1983, O’Rourke~\cite{Lshaped} showed that $1 + \lfloor \frac{r}{2} \rfloor$ vertex guards are necessary and sufficient to cover the interior of an orthogonal polygon with $r$ reflex vertices. Castro and Urrutia~\cite{estivill1994optimal} provided a tight bound of $\lfloor \frac{3(n-1)}{8}  \rfloor$ on the number of orthogonal guards placed on the vertices, sufficient to cover an orthogonal polygon with $n$ vertices. 

\subsection{Polygon with Holes Results}

For a polygon $P$ with $n$ vertices and $h$ holes, the value $n$ is the sum of the number of vertices of $P$ and the number of vertices of the holes. Let $g(n, h)$ be the minimum number of point guards and $g^v(n, h)$ be the minimum number of vertex guards necessary to cover any polygon with $n$ vertices and $h$ holes. 

O'Rourke~\cite{o1983galleries} gave a first proof on guarding polygons with holes and showed that $g^v(n, h) \leq \lfloor \frac{n + 2h}{3} \rfloor$.  Shermer conjectured that $g^v(n, h) \leq \lfloor \frac{n+h}{3} \rfloor$ and this is a tight bound. He was able to prove that, for $h = 1$, $g^v(n, 1) = \lfloor \frac{n + 1}{3}  \rfloor$. However, for $h>1$ the conjecture remains open. Shermer's result can be found in~\cite{artgallery,shermer}.

Sachs and Souvaine~\cite{Sachs} and Hoffmann et al.~\cite{hoffmann1991art} showed that no art gallery problem with $n$ vertices and $h$ holes requires more than $\lfloor \frac{n+h}{3} \rfloor$ point guards and provided an O($n^2$) algorithm to find such placement, which is based on triangulation and 3-coloring. 

\subsection{Orthogonal Polygon with Holes Results}

For this version, all polygons and holes are orthogonal and axis-aligned. Let $orth(n, h)$ be the minimum number of point guards and $orth^v(n, h)$ be the minimum number of vertex guards necessary to guard any orthogonal polygon with \textit{n} vertices and \textit{h} holes. Note that $orth(n, h) \leq orth^v(n, h)$.

O'Rourke's method extends to show that: $orth^v(n, h) \leq \lfloor \frac{n + 2h}{4}  \rfloor$.
Shermer~\cite{artgallery} conjectured that $orth^v(n, h) \leq \lfloor \frac{n + h}{4}  \rfloor$ which Aggarwal~\cite{aggarwal1984art} established for $h = 1$ and $h = 2$.  Zylinski~\cite{zylinski2006orthogonal} showed that $\lfloor \frac{n + h}{4}  \rfloor$ vertex guards are always sufficient to guard any orthogonal polygon with \textit{n} vertices and \textit{h} holes, provided that there exists a quadrilateralization whose dual graph is a cactus.

O'Rourke also conjectured that $orth(n, h)$ is independent of $h$: $orth(n, h) = \lfloor \frac{n}{4} \rfloor$, which was verified by Hoffmann~\cite{hoffmann}. In 1990, Hoffmann~\cite{hoffmann} showed that $\lfloor \frac{n}{4} \rfloor$ point guards are always sufficient and sometimes necessary to guard an orthogonal polygon with $n$ vertices and an arbitrary number of holes. 
In 1996, Hoffmann and Kriegel~\cite{hoffmann1996graph} showed that $\leq \lfloor \frac{n}{3} \rfloor$ vertex guards are sufficient to watch the interior of an orthogonal polygon with holes. 

Consider  $orth^v(n, .)$ as the maximum of $orth^v(n, h)$ over all \textit{h}. Hoffmann conjectured that $orth^v(n, .) \leq \lfloor \frac{2n}{7} \rfloor$, disproving the earlier conjecture of Aggarwal~\cite{aggarwal1984art} that $orth^v(n, .) \leq \lfloor \frac{3n}{11} \rfloor$. In 2013, Michael and Pinciu~\cite{michaelguarding} improved this bound and showed that an orthogonal gallery with $n$ vertices and an unspecified number of holes can be guarded by at most $\frac{17n - 8}{52}$ vertex guards (17 / 52 = 0.3269).

In 1998, Abello et al.~\cite{abello1998illumination} provided a first tight bound of $\lfloor \frac{3n + 4(h - 1))}{8} \rfloor$ for the number of orthogonal guards placed at the vertices of an orthogonal polygon with $n$ vertices and $h$ holes which are sufficient for the cover the polygon and described a simple linear-time algorithm to find the guard placement for an orthogonal polygon (with or without holes). 

In 2016, Rezende et al.~\cite{de2016engineering}  showed the chronology of developments, and compared various current algorithms aiming at providing
efficient implementations to obtain optimal, or near-optimal, solutions.

\subsection{Families of Convex Sets (Triangles and Quadrilaterals) on the Plane Results}
\label{convexSets}

In 1977, Toth~\cite{toth1977illumination} considered the following problem: Given a set $F$ of \textit{n} disjoint compact convex sets in a plane, how many guards are sufficient to cover every point in the boundary of each set in $F$. Toth proved that $max\{2n, 4n-7\}$ point guards are always sufficient to cover $n$ disjoint compact convex sets in a plane. Everett and Toussaint~\cite{everett1990illuminating} proved that the families of \textit{n} disjoint squares $n > 4$,  can always be guarded with \textit{n} point guards. For families of disjoint isothetic rectangles (rectangles are \textit{isothetic} if all their sides are parallel
to the coordinate axes.), Czyzowicz et al.~\cite{czyzowicz1993illuminating} proved that $\lfloor \frac{4n+4}{3} \rfloor$ point guards suffice and conjectured that $n+c$ point guards would suffice, $c$ is a constant. If the rectangles have equal width, then $n+1$ point guards suffice, and $n-1$ point guards are occasionally necessary. Refer~\cite{martini1999survey} for more details.

In 1994, Blanco et al.~\cite{blanco1994illuminating} considered the problem of guarding the region of the plane, excluding the interior of the quadrilaterals (free space). Given $n$ pairwise disjoint quadrilaterals in the plane whose convex hull has no cut-off quadrilaterals, they showed that $2n$ vertex guards are always sufficient to cover the free space and all locations could be found on $O(n^2)$ time. If the quadrilaterals are isothetic rectangles, all locations can be placed in $O(n)$ time.

Urrutia~\cite{variation} showed that any family of \textit{n} disjoint rectangles can be guarded with at most $n+1$ point guards. A big rectangle encloses the elements of $F$, and consider this as an orthogonal polygon with holes. The total number of vertices is now $4n + 4$. Using the results from guarding orthogonal polygon with holes, this can be guarded with $n + 1$ guards.

Garcia-Lopez~\cite{de1995problemas} proved that $\lfloor \frac{5m}{9} \rfloor$ vertex lights are always sufficient and $\lfloor \frac{m}{2} \rfloor$ vertex guards are occasionally necessary to guard the free space generated by a family of disjoint polygons with $m$ vertices. To cover the free space generated by any family of $n$ disjoint quadrilaterals, he proved that $2n$ vertex lights are always sufficient and occasionally necessary
and that $\lfloor \frac{5n+3}{3} \rfloor$ point guards are always sufficient. He conjectured that $n + c$ point lights can always cover the free space generated by $m$ disjoint quadrilaterals, $c$ is a constant which was proved false by Czyzowicz and Urrutia~\cite{Czyzowicz1996}.

Czyzowicz et al.~\cite{czyzowicz1994protecting} proposed the following problem: Given a set $F$ of \textit{n} disjoint compact convex sets in a plane, how many guards are sufficient to protect each set in $F$. A set $F$ is protected by a guard $g$ if at least one point in the boundary of $F$ is visible from $g$. They prove that $\lfloor \frac{2(n-2)}{3} \rfloor$ point guards are always sufficient and occasionally necessary to protect any family of $n$ disjoint convex sets, $n> 2$. To protect any family of $n$ isothetic rectangles, $\lceil \frac{n}{2} \rceil$ point guards are always sufficient, and $\lfloor \frac{n}{2} \rfloor$ point guards are sometimes necessary.

Czyzowicz et al.~\cite{czyzowicz1993illuminating} showed that any family of \textit{n} disjoint triangles can be guarded with at most $\lfloor \frac{4n+4}{3} \rfloor$ point guards are sufficient and $n-1$ are occasionally necessary to guards. They also showed that  $n + 1$ guards are always sufficient and $n-1$ guards are occasionally necessary to illuminate any family of $n$ homothetic triangles and conjectured that there is a constant \textit{c} such that $n+c$ point guards sufficient to guard any collection of \textit{n} triangles. Later, Toth~\cite{toth2003guarding} showed that $\lfloor \frac{5n+2}{4} \rfloor$ guards can monitor the boundaries and the free space of \textit{n} disjoint triangles.

\subsection{Polyhedral Terrain Results}

A polyhedral terrain is a polyhedral surface in three dimensions such that its intersection with any vertical line is either empty or a point. A polyhedral terrain is triangulated if each of its faces is a triangle. Notice that a polyhedral terrain has a different structure than a city with vertical buildings. The results related to guarding polyhedral terrain focus on edge and face guards~\cite{everett1994edge,cole1,batista2010complexity,bose1,iwamoto2012lower,iwamoto2012improved}.

\subsection{Polyhedron Results}

A polyhedron in $R^3$ is a compact set bounded by a piece-wise linear manifold.
For guarding an \textit{orthogonal polyhedral}, points guards are less effective. There exist polyhedra with \textit{n} vertices where guards placed at every vertex do not cover the whole interior and $O(n^{3/2})$ non-vertex guards are required for interior coverage~\cite{artgallery}.  Thus, results related to guarding $R^3$ polyhedrons focus on edge and face guards~\cite{variation,cano2012edge,souvaine2011face,viglietta2015reprint,benbernou2011edge,viglietta2017optimally}.

\subsection{City Guarding Results}

In 2008, Bao et al.~\cite{cityguarding} proposed the city guarding problem where one is given a city with $k$ vertical buildings, each having an axis-aligned rectangular base. The guards are to be placed only at the top vertices of the buildings. They showed that $\lfloor \frac{2(k-1)}{3} \rfloor + 1$ vertex guards are sometimes necessary and always sufficient to guard the roofs (Roof Guarding Problem). 
They further proved that $k + \lfloor \frac{k}{4} \rfloor + 1$ vertex guards are always sufficient to guard the ground and the walls, and $k + \lfloor \frac{k}{2} \rfloor + 1$ vertex guards are always sufficient to guard the aerial space, which includes all roofs and walls of the buildings, and the ground. Their results directly apply to problem (V1) and imply that $2k + 2\lfloor \frac{k}{4} \rfloor + 2$ vertex guards with $180^{\circ}$ vision are always sufficient to guard the walls and ground, and $2k + 2\lfloor \frac{k}{2} \rfloor + 2$ are needed to guard the city.

It follows from Table~\ref{t:results} that our results are a significant improvement over those that can be inferred   from~\cite{blanco1994illuminating,cityguarding}.
Due to space constraints, we refer the reader to Appendix~\ref{RW} for detailed related work.

We start with the following theorem, that allows us to limit our attention to guarding the roofs and walls of the buildings, and the ground.

\begin{theorem}
If guards are placed so that the roofs, walls, and the ground of the city are guarded, then every point in the aerial space of the city is guarded.
\end{theorem}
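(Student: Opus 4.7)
The plan is to exhibit, for every point $p$ of the 3D aerial space, a guard that sees $p$, by associating $p$ with a carefully chosen surface point directly below it.

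First I would drop a vertical ray from $p$ downward. Because every building is a rectangular vertical prism standing on the ground, this ray is guaranteed to meet either the ground or the top face of exactly one building; let $q$ be the first such intersection. By hypothesis some guard $g$ sees the surface point $q$. I then claim the same $g$ sees $p$, and verifying this splits into two pieces: showing that $p$ lies in $g$'s $180^{\circ}$ seen halfspace, and showing that the segment $gp$ is unobstructed by any building.

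The first piece is essentially free. The boundary of the halfspace is a vertical plane parallel to a side of $g$'s building, so which side of it a point lies on depends only on that point's horizontal coordinates; since $p$ and $q$ have identical horizontal coordinates by construction, $p$ lies on the same side as $q$. For the second piece I would work in the vertical plane $\Pi$ spanned by $g$ and the vertical line through $p$ and $q$. Inside $\Pi$ the segments $gp$ and $gq$ share the endpoint $g$ and terminate on a common vertical line at heights $z_p \ge z_q$, so a one-line linear-interpolation argument shows that at every horizontal distance from $g$ the altitude on $gp$ dominates that on $gq$. Each building $B$ meeting $\Pi$ appears as a rectangle whose bottom edge lies on the ground; hence if $gp$ were to enter the interior of $B$ at some horizontal location, then at the same location $gq$ would be at a lower but still positive altitude, placing it too inside $B$ and contradicting that $gq$ is unobstructed.

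The main obstacle I anticipate is not the altitude comparison itself, which is a short convex-combination computation, but cleanly handling the $180^{\circ}$ orientation together with the boundary cases — when $p$ is already on a surface (the hypothesis then applies directly), when $q$ coincides with $p$, or when $g$, $p$, $q$ are collinear (so that $gp\subseteq gq$ or vice versa and the claim is immediate). Once these small cases are dispatched, the argument goes through uniformly for both axis-aligned and arbitrarily-oriented building bases, since all I ever use about the building shapes is that they are vertical prisms on the ground and that the guard's halfspace is bounded by a vertical plane.
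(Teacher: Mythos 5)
Your proposal is correct and follows essentially the same route as the paper: project $p$ vertically down to a guarded surface point, and use the fact that buildings are vertical prisms resting on the ground (so any building blocking $gp$ would also block the lower segment $gq$ at the same horizontal location) to conclude that the guard seeing $q$ also sees $p$. The paper phrases this as a contradiction while you argue directly, and you additionally make explicit the (easy but worth stating) point that the $180^{\circ}$ half-space is bounded by a vertical plane, so $p$ and $q$ lie on the same side of it; otherwise the two arguments coincide.
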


\begin{proof}
Let $p$ be a point in the aerial space of the city and assume $p$ is not guarded. Let $p’ $ be the vertical projection of $p$ onto the ground (or a building roof) and let $g$ be a guard that sees $ p’ $ (such $g$ exists since the ground of the city is guarded). Then $g$, $p$ and $p'$ define a vertical plane $\pi$. Consider the vertical triangle defined by $g$, $p$, and $p'$. If any portion of a building intersects the triangle side $ gp $ at some point $q$ then the line segment $qq’ $ is part of that building, where $q’ $ is the vertical projection of $q$ onto the ground. Since the line segment $qq'$ intersects the triangle side $gp'$ it then follows  that $p'$ is not visible from $g$, a contradiction.       
\end{proof}

A similar result holds if we aim for walls and ground guarding only (no roof guarding requirement), including the space between the buildings.

\section{Axis-Aligned Rectangle-Base Buildings}
\label{s:AARB}

Given a rectangular city with \textit{k} disjoint vertical buildings, each having an axis-aligned rectangular base, the goal is to place the minimum number of cameras that can see only the half-space in front of them  (denoted as 180$^{\circ}$ range of vision), at the top corners (vertices) of the buildings to guard the city (roofs, walls, ground, and aerial space). 
Thus, when a guard (camera) is aligned with a wall of the building it is placed on the half-space seen by the guard is bounded by a vertical plane containing that wall.

\subsection{Roof Guarding}
\label{ss:RGAARB}

\begin{theorem}
\label{t:RGAARB}
Given a city with $k$ disjoint axis-aligned rectangular buildings, $k$ vertex guards are always sufficient and sometimes necessary to guard the roofs.

\end{theorem}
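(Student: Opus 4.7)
For the sufficiency direction, I would place exactly one camera per building. Pick any top corner $v$ of building $B_i$ and orient the camera so that its dividing vertical plane contains one of the two walls of $B_i$ meeting at $v$. Since the roof of $B_i$ is an axis-aligned rectangle with $v$ as a vertex, the entire rectangle lies in the closed half-space on the ``interior'' side of that wall, hence inside the camera's $180^{\circ}$ field of view. Doing this for each of the $k$ buildings gives a valid placement of exactly $k$ guards.

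For the necessity direction, I would exhibit a configuration in which every camera covers the roof of at most one building. I would place $k$ buildings with tiny footprints at carefully chosen positions and heights so that, from every top corner of every building $B_i$ and every one of the four admissible axis-parallel orientations, the $180^{\circ}$ half-space either contains no other building's roof (the remaining buildings lie on the wrong side of the dividing plane) or has every line of sight to another roof blocked by a strictly taller intervening wall. The key geometric fact I would exploit is that a camera at height $h$ cannot see a point of a roof at height $h'' \leq h'$ past an intervening wall of height $h' > h$ that lies in its $180^{\circ}$ half-space: a straightforward line-equation computation (parametrize the segment from the camera to any such point and evaluate its $z$-coordinate at the $x$-coordinate of the intervening wall) shows that the segment meets that wall strictly below its top edge. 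Together, these two obstructions (being on the wrong side of the dividing plane, or being optically hidden behind a taller wall) ensure that each of the $k$ roofs can only be guarded by a camera placed on that building itself, yielding the lower bound of $k$.

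The main obstacle is the necessity construction: I must simultaneously hide every roof from all $8$ (corner, orientation) views of every other building, while keeping the $k$ footprints mutually disjoint axis-aligned rectangles. The sufficiency direction, by contrast, is an immediate consequence of the orientation rule applied to a single rectangle and requires no nontrivial case analysis.
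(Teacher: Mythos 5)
Your sufficiency argument is correct and is exactly the paper's (one guard per building, aligned with a wall through the chosen corner, so the whole rectangular roof lies in the closed half-space); the paper dismisses this as trivial. The gap is in the necessity construction. You want a city in which no point of roof $i$ is visible from any top corner of any building $j\neq i$, so that each roof privately demands a guard on its own building. Such a city cannot exist. Concretely: for the globally tallest building there is no ``strictly taller intervening wall,'' and at any of its top corners every other building lies in at least one of the four admissible half-spaces, so the near top edge of the closest other building's roof is seen along an unobstructed descending segment. More decisively, the property you want is a statement about the visibility relation alone and does not depend on the $180^{\circ}$ restriction; if it held, then even omnidirectional vertex guards would need $k$ guards for that city, contradicting the tight bound $\lfloor \frac{2(k-1)}{3}\rfloor+1$ of Bao et al.\ for roof guarding with $360^{\circ}$ guards (which is $<k$ once $k\geq 4$). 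So no construction can make every roof visible only from its own building.

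The paper's lower bound instead uses a chain $B_1,\dots,B_k$ with strictly decreasing heights in which $B_{j-1}$ completely blocks visibility between $B_i$ and $B_j$ for $i<j-1$. There a guard on $B_i$ \emph{can} see (part of, or even all of) roof $i+1$; the point is that, because the field of view is a half-space bounded by a vertical plane through the guard, no single top corner of $B_i$ can completely cover both roof $i$ and roof $i+1$. Since roof $1$ is seen only from $B_1$ (tallest) and guards on $B_1$ see nothing of $B_3,\dots,B_k$, one guard is effectively spent per building and induction on the chain gives $k$. That trade-off between a guard's own roof and the next roof is exactly where the $180^{\circ}$ restriction enters, and it is the ingredient your plan omits; you would need to replace your ``private roofs'' construction with an argument of this charging/induction type.
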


\begin{proof}
The sufficiency bound is trivial. For the necessary part, consider a set $S = \{B_1, B_2, B_3, \dots, B_k \}$ of $k$ buildings, shown in Figure~\ref{SetUp}, with the following setup: 
\begin{enumerate}
\item the height $h_{B_i}$ of building $B_i$ is greater than the height $h_{B_{j}}$ of building $B_{j}$, $\forall \hspace{3pt}i , j $ such that $ 1 \leq i < j \leq k$, and 
\item $\forall$ $i < j-1$, building $B_{j-1}$ totally blocks the visibility between $B_{i}$ and $B_{j}$.
\end{enumerate}

\begin{figure}[h]
    \centering
    \begin{subfigure}[t]{0.5\textwidth}
        \centering
        \includegraphics[width = \linewidth]{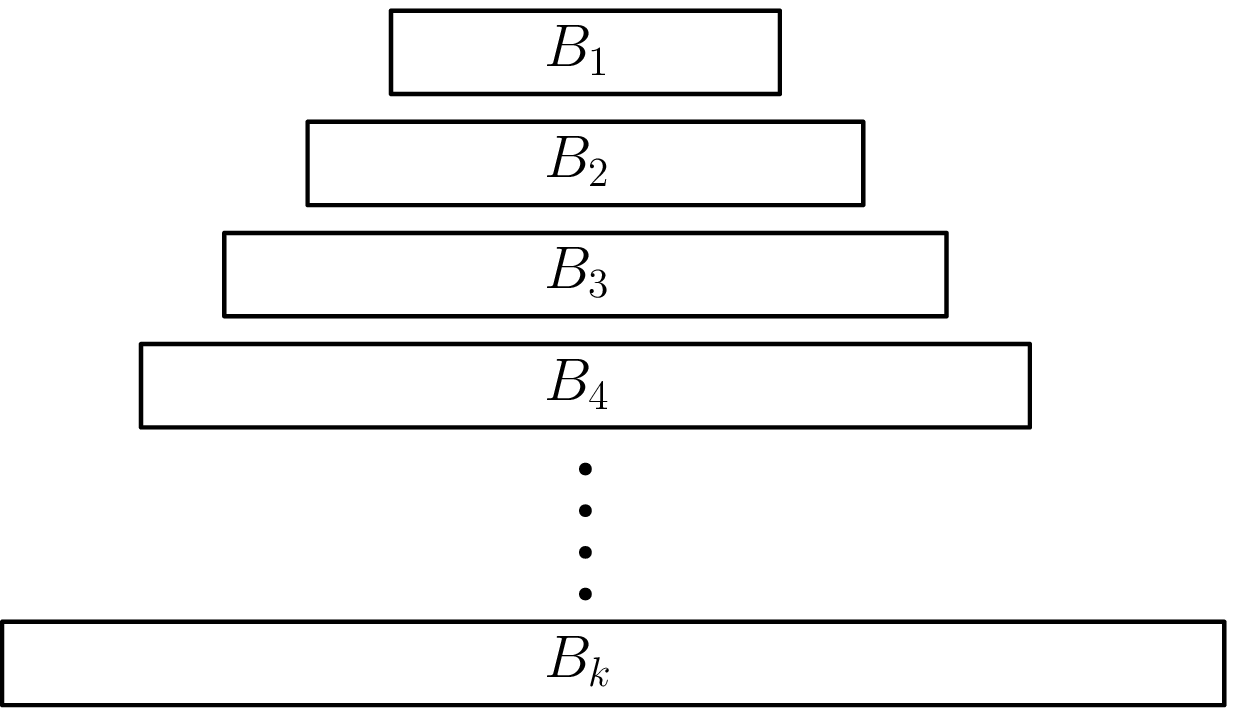}
        \caption{City Setup}
        \label{SetUp}
    \end{subfigure}
    \begin{subfigure}[t]{0.5\textwidth}
        \centering
        \includegraphics[width = 0.5\linewidth]{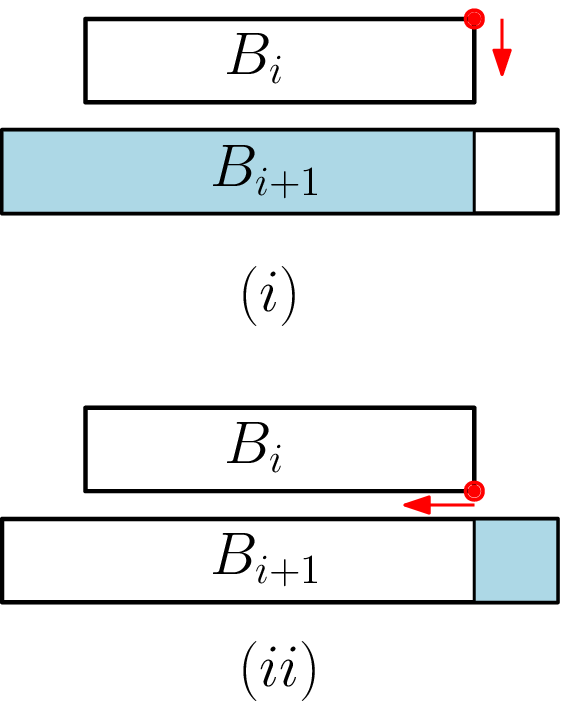}
        \caption{Possible guards position}
        \label{SetUp2}
    \end{subfigure}
    \caption{$k$ guards are needed to guard the roofs.}
    \label{f:roof1}
\end{figure}

We need to place the first guard on building $B_1$ to guard its roof, because $h_{B_1} > h_{B_i},  \forall i > 1$. There are four possible positions to place a vertex guard on building $B_1$. Let the guard be placed on one of the right vertices (placing the guard on one of the left vertices results in symmetric cases). Consider the two relevant orientations of the guard, shown in Figure~\ref{SetUp2}, out of three possible positions that can see the roof of $B_1$, where the arrow corresponds to the direction in which the guard is guarding the roof. Notice that a guard facing West is placed at the lower vertex of $B_1$ rather than at the top vertex. Due to the limited visibility of the guard, there is no vertex on building $B_1$ from where roofs of both building $B_1$ and $B_2$ are completely visible. Therefore, the next guard should either be placed on building $B_2$ or on $B_1$, such that the roof of building $B_2$ is completely visible after placing this guard. If we place the second guard on $B_1$, then the next guard must be placed on building $B_3$ because no point on the roof of building $B_3$ is visible by the previously placed guards. Thus, we can place the next guard on building $B_2$.  
The rest follows by induction on the number of buildings, as we are left with a similar problem on $k-1$ buildings.
\end{proof}


\subsection{Ground and Wall Guarding}
\label{ss:GWAA}


Vertically projecting the city on the ground results in a rectangle polygon with $k$ rectangular holes. As mentioned earlier, the number of guards required to guard the walls and ground is no larger than the number of guards needed for the following problem: 
\begin{subproblem}
\label{axisAlignedRectangle}
(V1) {\it Given an axis-aligned rectangle $P$ with $k$ disjoint axis-aligned rectangular holes, place vertex guards on hole boundaries such that every point inside $P$ is visible to at least one guard, where the range of vision of guards is $180^{\circ}$}. 
\end{subproblem}

On the other hand, it is easy to see that a lower bound on the number of guards for Subproblem~\ref{axisAlignedRectangle} can be used to obtain a lower bound for guarding the walls and the ground of a city with $k$ rectangular buildings: map the holes to buildings of the same height.

It is worth noticing though that the two problems are not equivalent, that is, for a given input, fewer guards might be needed to guard the walls and ground than the number needed to guard the holes defined by projecting the buildings to the ground.

Observe that $2k + \lfloor \frac{k}{2} \rfloor + 2$ vertex guards, placed on holes, can be obtained from~\cite{cityguarding} by replacing a $360^{\circ}$ guard with two $180^{\circ}$ guards. In what follows, we show how to improve this bound. For each hole, extend the right vertical edge in the upward (North) direction through the interior of the polygon until it encounters some horizontal edge of a hole or the outer rectangle. After extending the vertical edges,  extend both horizontal edges of each hole in the left (West) direction through the interior of the polygon until it encounters some vertical edge or extended vertical edge of a hole or the outer rectangle.
The steps above divide the polygon into 2$k$+1 shapes. Each shape corresponds to a monotone staircase (both in $x$ and $y$-direction). 
Only one guard is required to guard each staircase, placed at the South-East corner of the staircase, facing West. Out of $2k + 1$ guards, $2k$ guards are placed on the vertices of the holes while one guard is placed on a vertex of the rectangle $P$. Refer to Figure~\ref{f:extension} for visual details. 

\begin{figure}[b]
    \centering
        \includegraphics[width = \linewidth]{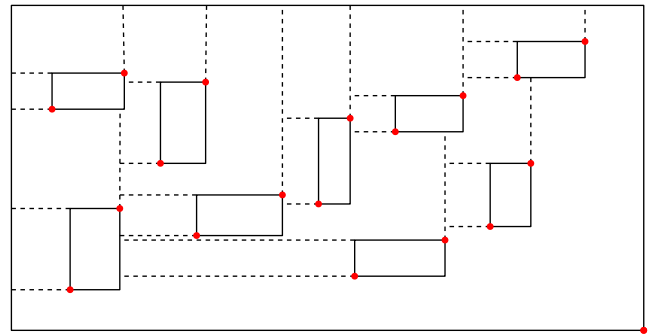}
    \caption{Extension of the right vertical edge and horizontal edges of each hole. $2k + 1$ guards are always sufficient to guard the walls and ground.}
    \label{f:extension}
\end{figure}

\begin{theorem}
\label{1+2k}
$2k+1$ guards are always sufficient to guard the walls and ground of a rectangular city with $k$ disjoint axis-aligned rectangular buildings, with at most one guard placed at a corner of the bounding rectangle. 
\end{theorem}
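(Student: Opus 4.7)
The plan is to flesh out the construction sketched in the paragraph immediately preceding the theorem: perform the edge-extension procedure on $P$, prove that it decomposes the free space into exactly $2k+1$ orthogonal sub-regions, and prove that each sub-region is a staircase-monotone polygon whose entire interior is covered by a single $180^\circ$ guard placed at its South-East vertex and facing West. I would address these three points in order.

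For the piece count, I would use a simple planar-topology accounting. Before any cuts the free space is a single connected region with $k$ holes. Extending the right vertical edge of each hole upward produces a chord whose two endpoints lie on distinct boundary components of the current region (the hole and either the outer boundary or another hole). Such a chord keeps the number of pieces unchanged while decreasing the number of holes by one, so after processing all $k$ vertical extensions the region is simply connected and still in one piece. Each of the $2k$ subsequent westward horizontal extensions is then a chord with both endpoints on the single boundary of a simply connected piece, so each one increases the piece count by exactly one. The total is $1 + 2k = 2k+1$, as claimed.

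Next, I would verify the geometric shape of each piece. By construction the East side of every piece is a single vertical segment (a right edge of a hole, a vertical extension of one, or the East side of $P$) and the South side is a single horizontal segment. Every reflex vertex is a North-West corner produced where a westward horizontal extension meets a vertical segment. This forces each piece to be monotone both in $x$ and in $y$, and in fact star-shaped from its South-East corner: any obstruction of the segment from that corner to a target point $q$ would have to sit both strictly North-West of the guard and strictly South-East of $q$, which is impossible along a straight line. A $180^\circ$ camera at the South-East vertex oriented to face West therefore sees the entire piece.

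Finally, I would check that the South-East corner of each piece is a vertex where a guard is allowed to sit. It is either a vertex of a hole, arising whenever the South and East sides of the piece meet at a South-East corner of a building, or it is the South-East corner of $P$, which happens for exactly one piece. This yields $2k$ guards at hole vertices and at most one guard at a corner of the bounding rectangle, proving the theorem. The main technical obstacle is the first step: one must argue carefully that each horizontal extension really splits a single piece rather than joining two boundary components, and this hinges on performing all $k$ vertical extensions first so that the region is simply connected before the horizontal phase begins. The star-shapedness argument is then essentially immediate from the staircase structure.
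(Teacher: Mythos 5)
Your construction is exactly the one the paper uses: the paper describes the same edge-extension procedure in the paragraph preceding the theorem and simply asserts that it yields $2k+1$ staircase pieces, each guarded from its South-East corner, so your contribution is essentially to supply the missing justification, and your two-phase topological accounting is the right way to obtain the count $1+2k$. Three of your supporting assertions need repair, however. First, the reflex vertices of the pieces do \emph{not} arise where a westward horizontal extension meets a vertical segment: at such a T-junction both incident pieces receive $90^{\circ}$ convex corners. The reflex vertices arise precisely at the South-East corner of each hole, the one corner from which no extension is drawn; there the free space occupies three quadrants and the missing quadrant is the North-West one. Your conclusion (every reflex vertex is an NW notch, hence each piece is an $xy$-monotone staircase that is star-shaped from its unique SE corner) is therefore correct, but for a different reason than the one you give. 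Second, and relatedly, the SE corner of a piece is never at the SE corner of a building; it sits either at the NE corner of a hole (where the hole's top edge meets the upward extension of its right edge) or at the SW corner of a hole (where the hole's left edge meets the westward extension of its bottom edge). This still yields $2k$ guards at hole vertices plus one at the SE corner of $P$, so the count in the theorem is unaffected.

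Third, the claim that each vertical extension joins two \emph{distinct} boundary components of the current region is asserted rather than proved, and it is precisely the step that makes the piece count come out right (a chord with both endpoints on the same component would split a piece prematurely). The fix is short: the upward chord from a hole terminates either on the outer rectangle or on the bottom edge of a hole whose bottom edge lies strictly above the source hole's top edge, so the ``lands on'' relation is acyclic and the $k$ vertical chords form a spanning tree on the $k+1$ original boundary components; the edges of a tree join distinct components in whatever order they are inserted. With these corrections your argument goes through and matches the paper's intended proof.
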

If however, we do not allow a guard to be placed at a corner of the enclosing rectangle $P$, then the number of guards needed could increase significantly. 
In the rest of this section, we prove an upper bound on the number of vertex guards, placed only on vertices of the holes (the setup in~\cite{blanco1994illuminating,cityguarding}). 

Let $S$ be the set of $k$ buildings in the city, contained in the axis-aligned rectangle $P$
defined by the points $[0, 0; x, y]$. Let $x_s^M, x_f^M$ be the starting and finishing boundry sequence along $x$-axis and $y_s^M, y_f^M$ be the starting and finishing boundry sequence along $y$-axis. We define four types of staircases (see Figure~\ref{types}):
(i) Rising staircase (RS): $x_s^M = 0$, $y_f^M = y$, $x_f^M$ is non decreasing along the positive $y$-axis, and $y_s^M$ is non decreasing along the positive $x$-axis
(ii) Falling staircase (FS):    $x_f^M = x$,  $y_f^M = y$, $x_s^M$ is non increasing along the positive $y$-axis, and $y_s^M$ is non increasing along the positive $x$-axis
(iii) Reverse rising staircase (RRS): $x_f^M = x$, $y_s^M = 0$, $x_s^M$ is non decreasing along the positive $y$-axis, and $y_f^M $ is non decreasing along the positive $x$-axis
(iv) Reverse falling staircase (RFS): $x_s^M = 0$,  $y_s^M = 0$, $x_f^M$ is non increasing along the positive $y$-axis, and $y_f^M$ is non increasing along the positive $x$-axis

\begin{figure}[t]
    \centering
    \begin{subfigure}[t]{0.5\linewidth}
        \centering
        \includegraphics[width = 0.8\linewidth]{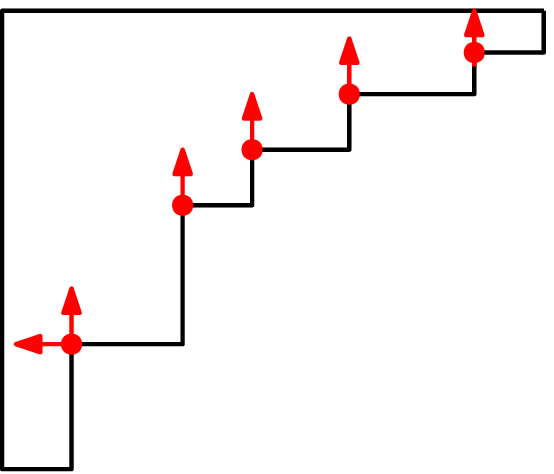}
        \caption{Rising Staircase.}
        \label{RS}
    \end{subfigure}%
    \begin{subfigure}[t]{0.5\linewidth}
        \centering
        \includegraphics[width = 0.8\linewidth]{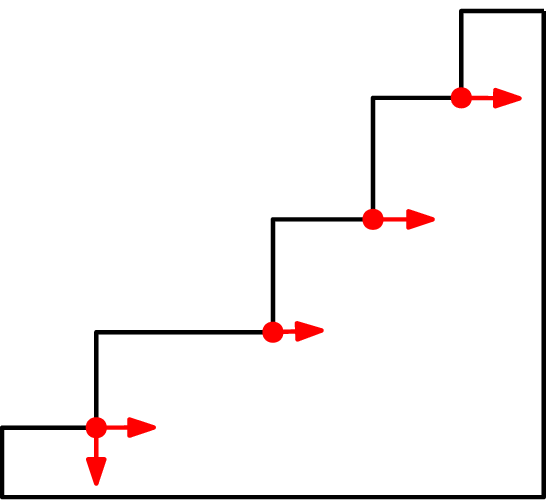}
        \caption{Reverse Rising Staircase.}
        \label{RRS}
    \end{subfigure}%
    
    \begin{subfigure}[t]{0.5\linewidth}
        \centering
        \includegraphics[width = 0.8\linewidth]{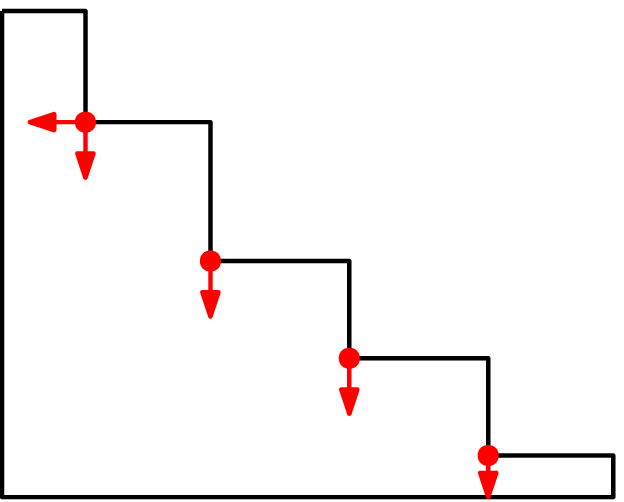}
        \caption{Reverse Falling Staircase}
        \label{RFS}
    \end{subfigure}%
     \begin{subfigure}[t]{0.5\linewidth}
        \centering
        \includegraphics[width = 0.8\linewidth]{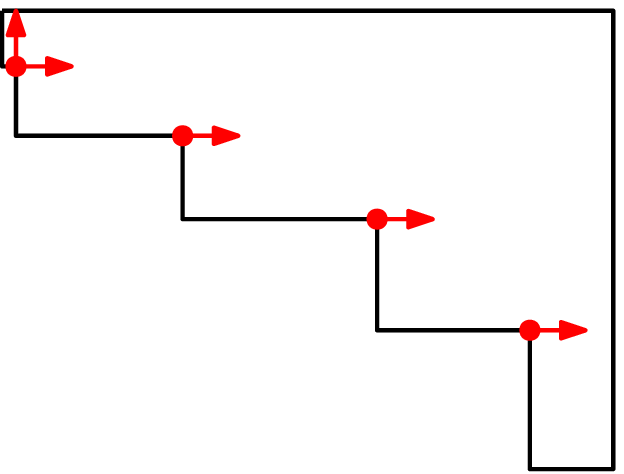}
        \caption{Falling Staircase}
        \label{FS}
    \end{subfigure}%
    \caption{Type of staircases and placement of guards.}
    \label{types}
\end{figure}
 
A rising staircase is constructed as follows: extend the horizontal edges of each hole towards the right (East) direction, then extend the vertical edges towards the South direction. The closed orthogonal polygon formed by the top edge and the left edge of $P$, and the extended edges of the holes, corresponds to a rising staircase.  
Falling, reverse rising, and reverse falling staircases are constructed similarly. 
Note that for each staircase a reflex vertex corresponds to a vertex of a hole. Thus, the number of buildings involved in a staircase is equal to the number of reflex vertices on the staircase. 

We find the staircase comprising the minimum number of buildings. WLOG assume the staircase involving the minimum number of buildings is $RRS$ (otherwise, we can rotate the input so that the staircase corresponds to RRS). For this staircase, place a guard on each reflex vertex, facing right (East), and an additional guard on the first (bottom) stair, with the guard facing down (South), as shown in Figure~\ref{types}. 
The number of guards required to cover the staircase is one more than the number of steps (stairs) in it.
In the worst case, each of the four staircases must have the same number of stairs, otherwise we can use one with the smallest number as RRS. 
  
\begin{figure}[h]
\centering
\includegraphics[width= \linewidth]{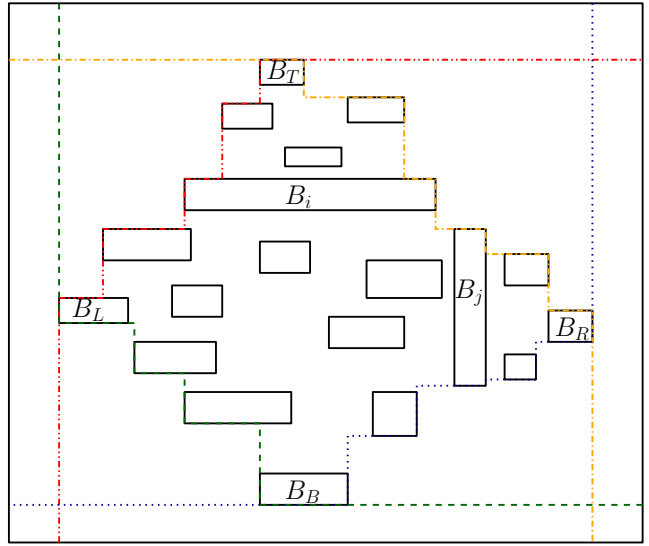}
  \captionof{figure}{Staircases $RS$ (dash dot / red), $FS$ (dash dot / orange), $RRS$ (dotted / blue) and $RFS$ (dashed / green). Buildings above $B_i$ lie in its vertical span and buildings right of $B_j$ lie in its horizontal span.}
  \label{span}
\end{figure}
In what follows, we provide a divide and conquer approach to find an upper bound on the number of guards.
For some building $B$, the vertical span of $B$ is the parallel strip defined by the vertical sides of $B$ and containing $B$. The horizontal span is defined accordingly. 
Let $B_L$ be the leftmost building, $B_T$ be the topmost building, $B_R$ be the rightmost building, and $B_B$ be the bottom-most building of the city. A building $B_i$ is called an \textit{internal building} if $B_i \not \in \{B_L, B_T, B_R, B_B\}$. The pair of staircases (i) $RS, FS$ (ii) $FS, RRS$ (iii) $RRS, RFS$ and (iv) $RFS, RS$ are called \textit{adjacent staircases} while the pairs (v) $RS, RRS$, and (vi) $FS, RFS$ are called \textit{opposite staircases}.

Assume two adjacent staircases, say $RS$ and $FS$, share the same building $B_i \not \in \{B_L, B_T, B_R, B_B\}$. Then, all buildings that lie in the upper half-plane defined by the line supporting the upper horizontal edge of $B_i$ (buildings above $B_i$) are in the vertical span of $B_i$ (see Figure~\ref{span}). Similarly, if staircases $RRS$ and $FS$ include the walls of the same building $B_j \not \in \{B_L, B_T, B_R, B_B\}$, then all buildings that lie on the right of $B_j$ are in the horizontal span of $B_j$.

Notice it is possible that, from the set of building pairs $(B_L, B_T), (B_T, B_R), (B_R, B_B)$, and $(B_B, B_L)$, the pair in one of the sets corresponds to the same building.
In this situation (call it Case 0), one of the staircases consists of only one stair, and two guards are required to guard the staircase. Using a placement of guards like in Theorem~\ref{1+2k}, $2k + 2$ guards are required to cover the walls and ground of such a rectangular city.
Thus, from now on, we assume this is not the case.

Consider the four staircases $RS, FS, RFS$, and $RRS$. We can have four cases:

\noindent \textbf{Case 1}: No adjacent pair of staircases share the same internal building, and no opposite pair of staircases share the same building.

WLOG assume that $RRS$ contains the minimum number of stairs; place the guards in a similar fashion as in Theorem~\ref{1+2k}. Overall, we place two guards on each building and the rest on the reflex vertices of the staircase $RRS$. 
 
The upper bound on the number of vertex guards required to cover the staircase of $P$ is achieved when the number of buildings involved in the construction of each staircase is the same. Let the staircases $RS$ and $RRS$ contain $\delta$ distinct buildings. Staircase $FS$ contains $\delta - 2$ distinct buildings because building $B_T$ is already counted in staircase $RS$ and building $B_R$ is counted in staircase  $RRS$. Similarly, $RFS$ contains $\delta - 2$ distinct buildings.  Note that $\delta + \delta + \delta - 2 + \delta - 2 = k$ and thus $\delta = \lfloor \frac{k}{4} \rfloor  + 1$. Therefore, to guard each staircase, we require $\delta + 1 = \lfloor \frac{k}{4} \rfloor + 1 + 1 = \lfloor \frac{k}{4} \rfloor + 2$ guards. 

We place $2$ guards on each building and $\lfloor \frac{k}{4} \rfloor + 2$ guards to cover the staircase. Therefore, $2k +\lfloor \frac{k}{4} \rfloor + 2$ guards are required to cover the walls and ground. 

\noindent \textbf{Case 2}: At least one pair of opposite staircases shares the same building, and no adjacent staircases share the same interior building.

Let the staircases $RS$ and $RRS$ share a building $B_i$. Refer to Figure~\ref{situation3} and note that extending the top edge of $B_i$ to the left until it hits $P$ will not result in an intersection with the other buildings. Similarly, extending the bottom edge of $B_i$ to the right until it hits $P$ will not result in an intersection with the other buildings. We divide the city into two sub-cities, $city_1$ and $city_2$ (green and orange boundaries in Figure~\ref{situation3}), by extending the top edge of $B_i$ towards left and the bottom edge towards the right. Let $B_i$ be included in both sub-cities.

All buildings in $city_1$ lie either above or towards the right of $B_i$. We place two guards on each building, one at the North-West corner and one at the South-East corner, both facing East. We further place a third guard on the North-West corner of $B_i$, facing West. 
All buildings in $city_2$ lie either below or towards the left of $B_i$. We place two guards on each building, one at the South-East corner and one at the North-West corner, both facing West. We further place an additional guard on the South-East corner of $B_i$ facing East. In total, we have placed six guards on $B_i$. However, two guards are duplicates, so we only have four guards on $B_i$.  
Thus, $2k + 2$ guards are required to cover the walls and ground of the city. 
\begin{figure}[h]
\centering
\includegraphics[width=\linewidth]{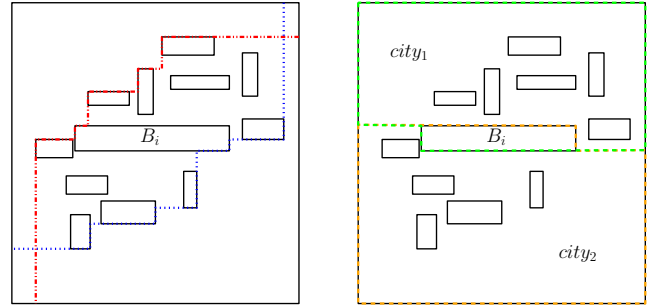}
\caption{Building $B_i$ is shared by staircases $RS$ and $RRS$. Staircase $RS$ is shown in dash dot / red color and staircase $RRS$ is shown in dotted / blue color.}
\label{situation3}
\end{figure}

\noindent \textbf{Case 3}: At least one pair of adjacent staircases shares the same interior building, and no pair of opposite staircases  shares the same building.

We use the following recursive approach to compute an upper bound on the number of guards. Let $B_i$ be a building shared by two adjacent staircases, say $RS$ and $FS$, as shown in Figure~\ref{DivisionOfCity}. Let $\alpha_i$ be the number of buildings that lie above $B_i$ and $\beta_i$ be the number of buildings (excluding $B_i$), that lie in the lower half-plane defined by the line supporting the upper horizontal edge of $B_i$. Note that $\alpha_i + \beta_i + 1 = k$. Let $C$ be the set containing all such buildings $B_i$ (walls included in more than one staircase). Let $B_j \in C$ be the building that minimizes the value $|\alpha_j - \beta_j|$, such that $\alpha_j, \beta_j \geq 3$.

If such building does not exist then each building $B_j$ in set $C$ has $\alpha_j \leq 3$ or $ \beta_j \leq 3$. We can use a similar argument as the one discussed in Case 2. Recall that in the worst case all staircases should have an equal number of buildings/stairs. It is easy to notice that there exist at most three buildings shared by a staircase  pair (i) $RS, FS$ (ii) $FS, RRS$ (iii) $RRS, RFS$, or (iv) $RFS, RS$, as $\alpha_i < 3$ or $\beta_i < 3$. Let each of $RS, RRS$ contain $\delta$ distinct buildings. Staircases $FS, RFS$ contain $\delta - 6$ distinct buildings, as three buildings are included in each of $RS$ and $RRS$. There are $k$ buildings, $2 \times \delta + 2 \times (\delta - 6) = k$, thus $4 \times \delta - 12 = k$, and   $\delta = \lfloor \frac{k}{4} \rfloor + 3$. To guard the staircase we need at most $\lfloor \frac{k}{4} \rfloor + 4$ guards, resulting in $2k + \lfloor \frac{k}{4} \rfloor + 4$ guards overall.

If there exists a building $B_j$ such that $\alpha_j, \beta_j \geq 3$, we proceed as follows.
Let the staircases $RS$ and $FS$ share building $B_j$.  There can be two cases: (i) there are no buildings within the horizontal span of $B_j$ (refer to Figure~\ref{cityDivide}) and (ii) there exist buildings within the horizontal span of $B_j$ (refer to Figure~\ref{cityDivide2}).  

\begin{figure}[t]
\centering
  \subcaptionbox{No buildings within the horizontal span of $B_j$\label{cityDivide}}{\includegraphics[width=\linewidth]{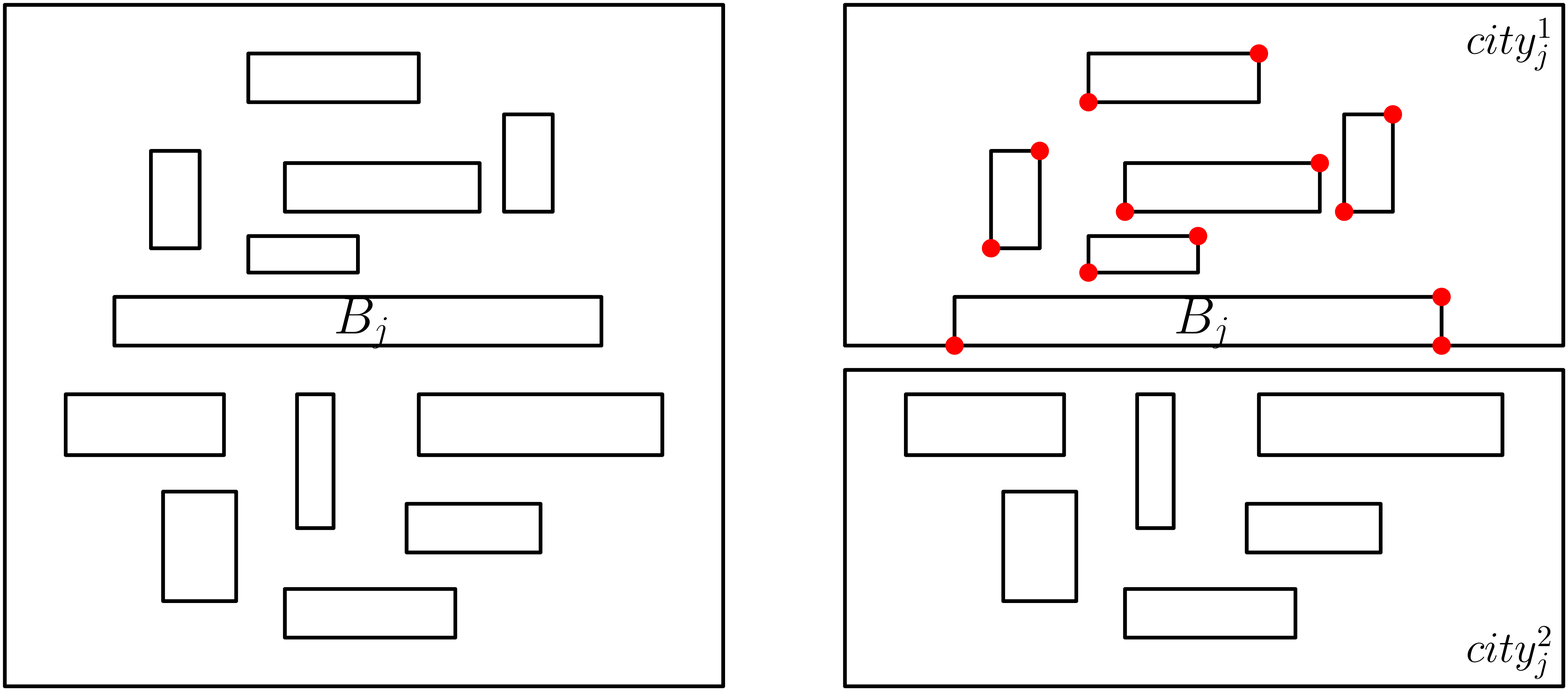}}\hspace{1em}
  
  \subcaptionbox{There exist buildings within the horizontal span of $B_j$\label{cityDivide2}}{\includegraphics[width=\linewidth]{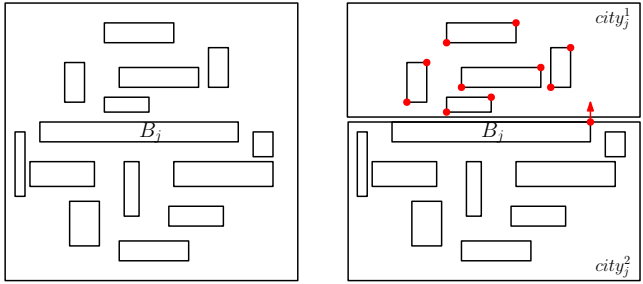}}
    \caption{Building $B_j$ shares its walls with $RS$ and $FS$. Divide city into two sub-cities, $city_j^1$ and $city_j^2$.}
    \label{DivisionOfCity}
\end{figure}

Consider the first case where none of the buildings lie within the horizontal span of $B_j$. We divide the city into two sub-cities, $city_j^1$ with $\alpha_j$ buildings and $city_j^2$ with $\beta_j$ buildings. All the buildings in one of the sub-cities lie inside the vertical span of $B_j$. Let this sub-city be $city_j^1$. We add $B_j$ to $city_j^1$, which results in a total of $(\alpha_j + 1)$ buildings in  $city_j^1$, and follow Case 0, which results in a total of $2 (\alpha_i + 1) + 1$ guards to guard the walls and ground of this sub-city, as shown in Figure~\ref{cityDivide}. For $city_j^2$, we consider the Case it falls in and place the guards accordingly. For the placement of guards, we treat the two sub-cities as independent cities. It is important to notice that only one of $city_j^1$ and $city_j^2$ above can be in Case 3, while the other one is in Case 0. Thus, only one of the two cities could need further divisions. 

Assume Case 3 keeps occurring, and we need to divide the city \textit{m} times.
During each division, the sub-city with corresponding building $B_j$ contains greater than or equal to four buildings and one additional guard is required to guard such sub-city. Let the first division divide the city into two sub-cities with $k_1, k - k_1$ buildings each. The second division splits the sub-city with $k - k_1$ buildings into $k_2, k - k_1 - k_2$ buildings and so on, down to sub-cities with $k_m, k - \sum_{i = 1}^{m} k_i$ buildings. Each resulting sub-city does not need to be divided further. Let $k' = \sum_{i = 1}^{m} k_i$. For each sub-city with $k_1, k_2, \dots k_m$ buildings, we require $2k_i + 1$ guards, where $k_i \geq 4$, and for the last sub city we need at most $4 + \lfloor (k - k')/4 \rfloor$ additional guards. Thus, the total number of guards required to guard the city is:

\noindent $2k_1 + 1 + 2k_2 + 1 + \dots + 2k_m + 1 + 2(k - k') + 4 + \lfloor \frac{ k - k'}{4} \rfloor = 2k + m + 4 + \lfloor \frac{ k - k'}{4} \rfloor \leq 2k + 4 + k'/4 +  \lfloor \frac{ k - k'}{4} \rfloor \leq  2k + \lfloor \frac{k}{4} \rfloor + 4$

Consider the second case where buildings lie within the horizontal span of $B_j$. We divide the city into two sub-cities, $city_j^1$ with $\alpha_j$ buildings and $city_j^2$ with $\beta_j$ buildings, such that all buildings in one of the sub-cities lie inside the vertical span of $B_j$. Let this sub-city be $city_j^1$. We add $B_j$ to $city_j^2$, which results in a total of $(\beta_j + 1)$ buildings in $city_j^2$. For the placement of guards, $city_j^1$ is in Case 0, and we place two guards on each building of $city_j^1$, and one guard on building $B_j$ facing towards $city_j^1$, which results in a total of $2 (\alpha_i) + 1$ guards to cover the walls and ground of $city_j^1$, as shown in Figure~\ref{cityDivide2}. For $city_j^2$, we consider the case it falls in and places the guards accordingly. For the placement of guards, we treat the two sub-cities as independent cities. Using a similar explanation as in the first case, we obtain the upper bound of $2k + \lfloor \frac{k}{4} \rfloor + 4$.

\noindent \textbf{Case 4}: At least one pair of opposite staircases and one pair of adjacent staircases share the same building.

We place guards according to Case 2 and conclude that  $2k + 2$ guards are sufficient to cover the walls and ground of the city.

The derived upper bound obviously holds when  
the guards can have {\it arbitrary} orientation and we have:  
\begin{theorem}
\label{th-aa}
$2k+\lfloor \frac{k}{4} \rfloor +4$ guards are always sufficient to guard the walls and ground of a rectangular city with $k$ disjoint axis-aligned rectangular buildings, with all guards placed on vertices of the buildings. 
\end{theorem}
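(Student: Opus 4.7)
The plan is to leverage the four-staircase decomposition (RS, FS, RRS, RFS) together with the case analysis already developed earlier in this section, showing that in every configuration the guard count is at most $2k + \lfloor k/4 \rfloor + 4$. First I would construct the four staircases and classify the input by which pairs of staircases (adjacent vs.\ opposite) share a common building; this partitions all inputs into Cases 0--4. A common subroutine in all cases is to place two guards per building to handle the wall coverage, so the work reduces to bounding the number of additional guards needed to cover the region exterior to the buildings, using the staircase that contains the fewest buildings.

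For Cases 0, 1, 2, and 4 the bound follows by direct accounting. In Case 1, by the balancing argument (each of the four staircases contains $\delta$ or $\delta-2$ distinct buildings, so $4\delta - 4 = k$ and $\delta = \lfloor k/4\rfloor + 1$), the smallest staircase needs $\lfloor k/4 \rfloor + 2$ extra guards, yielding $2k + \lfloor k/4\rfloor + 2$ overall. In Cases 0, 2, and 4, the degeneracy (opposite-staircase sharing or coincidence with the bounding rectangle) allows a clean split where only $2k+2$ guards suffice. Each of these totals is within the claimed bound.

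The main technical step is Case 3, where an adjacent pair of staircases shares an internal building and no opposite pair does. I would apply divide and conquer: locate a shared internal building $B_j$ that is balanced in the sense $\alpha_j, \beta_j \geq 3$, extend the appropriate edge of $B_j$ to split the city into two sub-cities $\mathrm{city}_j^1$ and $\mathrm{city}_j^2$, and exploit the key geometric observation that one of the two sub-cities (the one containing all buildings in the vertical span of $B_j$) automatically lies in Case 0 and costs only $2k_i + 1$ guards. Recurse on the other sub-city. If $m$ cuts produce Case-0 sub-cities of sizes $k_1, \dots, k_m$ with $k' = \sum_i k_i$, and the terminal residual sub-city contributes at most $2(k-k') + \lfloor (k-k')/4\rfloor + 4$, summing gives
\[
2k + m + 4 + \left\lfloor \frac{k-k'}{4}\right\rfloor \leq 2k + \left\lfloor \frac{k}{4}\right\rfloor + 4,
\]
where the inequality uses $k' \geq 4m$ because each cut consumes at least four buildings from the recursing side. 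For the base case, when no balanced $B_j$ with $\alpha_j, \beta_j \geq 3$ exists, a Case-2-style argument shows each adjacent staircase pair shares at most three buildings, and the four-staircase balance $4\delta - 12 = k$ again yields $\lfloor k/4 \rfloor + 4$ staircase guards.

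The hard part will be the telescoping bookkeeping in Case 3: correctly charging the $+1$ per cut against the $\lfloor k/4 \rfloor$ budget, and confirming the geometric invariant that the non-recursing side of each cut must always be in Case 0. This invariant, together with the $\alpha_j, \beta_j \geq 3$ balance condition, is what keeps the recursion depth bounded by $k/4$ rather than allowing a blow-up. Once this is in place, Theorem~\ref{th-aa} follows by combining the five case bounds, each of which is at most $2k + \lfloor k/4\rfloor + 4$.
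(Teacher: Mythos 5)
Your proposal follows the paper's own argument essentially step for step: the same four-staircase construction, the same classification into Cases 0--4, the same balancing computations ($4\delta-4=k$ in Case 1 and $4\delta-12=k$ in the degenerate subcase of Case 3), and the same divide-and-conquer with the $\alpha_j,\beta_j\ge 3$ condition and the telescoping bound $2k+m+4+\lfloor (k-k')/4\rfloor \le 2k+\lfloor k/4\rfloor+4$ via $k_i\ge 4$. No substantive differences to report.
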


We also obtain the following Theorem, restricting the visibility of guards in~\cite{blanco1994illuminating} to $180^{\circ}$:
\begin{theorem}
Given $k$ pairwise disjoint isothetic rectangles in the plane, $2k+\lfloor \frac{k}{4} \rfloor +4$ vertex guards are always sufficient to guard the free space. 
\end{theorem}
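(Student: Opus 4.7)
The plan is to reduce this to Theorem~\ref{th-aa} via a bounding rectangle. Let $P$ be an axis-aligned rectangle chosen large enough that every one of the $k$ isothetic rectangles lies strictly in the interior of $P$. This yields an instance of Subproblem~\ref{axisAlignedRectangle}: the outer rectangle $P$ with $k$ disjoint axis-aligned rectangular holes. Applying Theorem~\ref{th-aa} gives a placement of $2k+\lfloor k/4 \rfloor + 4$ vertex guards, all placed on the hole (rectangle) vertices, that guards every point in the interior of $P$ not lying in the interior of any rectangle.

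The remaining step is to show that this same placement also guards the free space outside $P$. The key observation is that the $180^{\circ}$ region of vision of a guard is an unbounded axis-aligned half-plane truncated only by occlusions from the rectangles, never by the boundary of $P$. Hence each guard sees an infinite region extending beyond $P$. I would partition the complement of $P$ into four axis-aligned outer strips (above, below, left of, and right of $P$) and four infinite corner quadrants, and show that each such region is contained in the visibility half-plane of some guard from the Theorem~\ref{th-aa} construction.

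The main obstacle is verifying this exterior coverage uniformly across the four cases (Cases 1--4) in the proof of Theorem~\ref{th-aa}, since each case distributes guard orientations somewhat differently. A clean way to handle it is to exploit the fact that the staircases $RS$, $FS$, $RRS$, $RFS$ used in the construction together involve guards facing all four cardinal directions, and that each staircase contains at least one of the extremal rectangles $B_L, B_T, B_R, B_B$, on which an outward-facing guard necessarily sits. Choosing $P$ with sufficient clearance from all the rectangles rules out any spurious occlusion between such a peripheral guard and the exterior region its half-plane must cover, so the guards' unbounded visibility strips tile the complement of $P$ and, together with the interior coverage from Theorem~\ref{th-aa}, yield the free-space coverage claimed.
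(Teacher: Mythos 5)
Your overall route is the same one the paper takes: the paper states this theorem with no separate proof, treating it as an immediate corollary of Theorem~\ref{th-aa} via exactly the bounding-rectangle reduction you describe. You correctly identify the one point that genuinely needs an argument and that the paper glosses over, namely that the free space of the Blanco et al.\ setting is unbounded, so covering the interior of one fixed box $P$ is not literally the same statement. However, your proposed way of closing that gap --- partitioning the complement of $P$ into four strips and four quadrants and checking, separately for each of Cases 1--4 of Theorem~\ref{th-aa}, that some outward-facing guard's half-plane covers each piece --- is left as a sketch (you yourself call it the ``main obstacle''), and it is considerably harder than necessary; it is not clear that every case of the construction really does produce guards facing all four cardinal directions without occlusion. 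A cleaner finish: observe that the guard placement produced by Theorem~\ref{th-aa} is determined by the holes alone --- all guards sit on hole vertices, and the staircase structure, the case distinction, and the recursive subdivision depend only on the relative positions of the rectangles, not on the size of $P$. Hence the \emph{same} guard set works for every sufficiently large axis-aligned rectangle $P'\supseteq P$, and since visibility inside $P'$ implies visibility in the plane, any free-space point $q$ is guarded by applying Theorem~\ref{th-aa} to a box large enough to contain both $q$ and all the rectangles. This exhaustion argument makes the exterior case analysis unnecessary and turns your outline into a complete proof.
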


\subsection{City Guarding}
\label{ss:CGAA}

\begin{theorem}
\label{CGFT}
Given a city with $k$ disjoint axis-aligned buildings within an axis-aligned rectangle $P$, the number of axis-aligned cameras with 180$^{\circ}$ range of vision needed to guard the city (roofs, walls, and ground) is upper bounded by (i) $2k+\lfloor \frac{k}{4} \rfloor +4$ when cameras are placed on buildings only, and (ii) $2k+1$ when at most one camera can be placed at a corner of $P$.
\end{theorem}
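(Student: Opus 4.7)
My plan is to invoke the reduction theorem at the end of Section 1 -- which states that any placement guarding the roofs, walls, and ground simultaneously also guards the entire aerial space -- and then argue that the wall-and-ground placements of Theorems \ref{th-aa} and \ref{1+2k} already cover every roof, so no additional guards are needed. Part (i) will then follow from Theorem \ref{th-aa} and part (ii) from Theorem \ref{1+2k}, without having to pay anything extra for the roofs or for the $k$ needed in Theorem \ref{t:RGAARB}.

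The key geometric observation I would use is that a $180^{\circ}$ guard placed at a top corner of its own rectangular building sees the whole roof of that building if and only if its viewing half-space contains the diagonal of the roof opposite to that corner. Explicitly, at NW the roof is contained in the half-space when the guard faces east or south; at NE, when facing west or south; at SE, when facing west or north; at SW, when facing east or north. Since every guard in both constructions is already placed at a top corner of some building, it suffices to verify that each building receives at least one of its guards with an orientation that puts the entire roof inside its visibility half-space.

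For part (ii), every guard of Theorem \ref{1+2k} lies at the south-east corner of a staircase in the decomposition and faces west; such a guard therefore sits at an NE or SE top corner of its underlying building facing west, and its half-space $\{x \leq x_{\mathrm{guard}}\}$ contains the entire roof. For part (i), the case analysis of Theorem \ref{th-aa} places two guards at diagonally opposite top corners of each building -- NW and SE facing the same cardinal direction in Case 2, a symmetric pair in Case 4, the Theorem \ref{1+2k} pattern in Case 1, and recursively either Case 0 or Case 3 inside the sub-cities of Case 3. For any such diagonal pair at NW and SE facing the same direction, at least one of the two half-spaces contains the entire roof rectangle (e.g.\ the NW-east guard sees $\{x \geq x_1\} \supseteq [x_1,x_2]\times[y_1,y_2]$). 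Combined with the reduction theorem, both bounds follow.

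The main obstacle I anticipate is the bookkeeping: one must walk through Cases 0--4 of Theorem \ref{th-aa} -- including the recursive sub-division invoked in Case 3 -- and check that the two per-building guards are never simultaneously oriented at corners whose chosen half-spaces miss the roof. This is where the construction's consistent choice of diagonally opposite corners with matching east/west (or north/south) orientations is essential, because for any such diagonal pair at least one of the two half-spaces necessarily contains the roof. Once this routine per-case check is completed, roofs come for free and the stated bounds in parts (i) and (ii) follow immediately.
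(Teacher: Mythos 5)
Your proposal is correct and follows essentially the same route as the paper: reuse the wall-and-ground placements of Theorems \ref{th-aa} and \ref{1+2k}, observe that each building carries guards at diagonally opposite top corners facing the same direction (or a west-facing guard at an east corner), so at least one guard's half-space contains the whole roof, and conclude via the reduction theorem. The paper's own proof is just a terser version of your per-case verification.
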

\begin{proof}
The solution described earlier in Section~\ref{ss:GWAA} established either $2k+\lfloor \frac{k}{4} \rfloor +4$ guards in case (i) and or $2k+1$ guards in case (ii) to cover the ground and the walls of the city. In both (i) and (ii), on each building, we place at least two guards, on diagonal corners, facing in the same direction. Thus, one of these two guards also guards the roof of the building. Therefore, $2k+\lfloor \frac{k}{4} \rfloor +4$ guards are always sufficient to guard the city in case (i) and $2k+1$ in case (ii).  
\end{proof}

\section{Arbitrary Oriented Rectangle Buildings}
\label{s:AORB}

Given a rectangular city with \textit{k} vertical buildings, each having a rectangular base, the goal is to place cameras with 180$^{\circ}$ range of vision, at the top corners (vertices) of the buildings, to guard the city. In all our proofs, each guard is aligned with a wall of the building it is placed on, similar to the axis-aligned version. 
The same city structure in Theorem~\ref{t:RGAARB} leads to:

\begin{theorem}
Given a city with \textit{k} disjoint rectangular buildings, \textit{k} vertex guards are always sufficient and sometimes necessary to guard the roofs.
\end{theorem}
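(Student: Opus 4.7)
The plan is to handle sufficiency and necessity separately, mirroring the structure used for Theorem~\ref{t:RGAARB}. For sufficiency, I would argue that $k$ guards always suffice by the following direct observation: the roof of any rectangular building, regardless of orientation, is a convex quadrilateral. A guard placed at any top vertex with its $180^{\circ}$ field of view aligned along one of the incident roof edges has that entire convex roof contained in its half-plane of visibility. Applying this to each of the $k$ buildings independently yields a valid covering of all roofs with exactly $k$ vertex guards.

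For necessity, my plan is to re-use verbatim the construction from the proof of Theorem~\ref{t:RGAARB}, namely the sequence $B_1,B_2,\dots,B_k$ of axis-aligned rectangles with strictly decreasing heights $h_{B_1}>h_{B_2}>\cdots>h_{B_k}$ arranged so that each building $B_{j-1}$ completely blocks visibility between $B_i$ and $B_j$ for all $i<j-1$ (Figure~\ref{SetUp}). Since any axis-aligned rectangle is a special case of an arbitrarily oriented rectangle, this is a legal instance of the current setting. I would then repeat the inductive argument: $B_1$ is the tallest, so its roof cannot be seen from any other building and must be guarded by a vertex of $B_1$ itself; the limited $180^{\circ}$ field of view prevents that single guard from simultaneously covering the roof of $B_2$; hence a fresh guard is forced for $B_2$; and the remaining sub-instance on $B_2,\dots,B_k$ is structurally identical, so induction yields the lower bound of $k$.

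The step I expect to require the most care is checking that the guard model in Section~\ref{s:AORB} really does restrict the adversary to the same four half-planes per building that were used in the axis-aligned proof. The authors stipulate at the start of Section~\ref{s:AORB} that ``each guard is aligned with a wall of the building it is placed on,'' so when the construction happens to be axis-aligned the admissible half-planes are exactly E, W, N, S, and no new guard orientation becomes available just because we now permit arbitrary rotations in principle. Once this observation is in place, the case analysis from Theorem~\ref{t:RGAARB} (the two relevant orientations at a right-side top vertex of $B_1$, and the symmetry argument for the left-side vertices) transfers with no change, and the induction closes the proof.
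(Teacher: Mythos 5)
Your proposal is correct and follows essentially the same route as the paper: sufficiency via one guard per (convex) roof, and necessity by reusing the axis-aligned construction of Theorem~\ref{t:RGAARB}, which is exactly what the paper does (it states that ``the same city structure in Theorem~\ref{t:RGAARB} leads to'' this result). Your extra check that the guard model in Section~\ref{s:AORB} does not enlarge the set of admissible orientations on an axis-aligned instance is a reasonable point of care, but it does not change the argument.
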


As before, the problem of guarding the ground and the walls reduces to: 
\begin{subproblem}
(V2) {\it Given an axis-aligned rectangle $P$ with $k$ disjoint rectangular holes, place vertex guards on hole boundaries such that every point in $P$ is visible to at least one guard, where the range of vision of guards is $180^{\circ}$}.
\end{subproblem}


\begin{theorem}
\label{AARO}
$3k + 1$ vertex guards are sometimes necessary to guard a rectangular polygon \textit{P} with \textit{k} disjoint rectangular holes, where guards are placed only at vertices of the holes. We conjecture the bounds are tight. 
\end{theorem}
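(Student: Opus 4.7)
The plan is to exhibit, for each $k$, a specific configuration of $k$ rotated rectangles inside the bounding rectangle $P$ that forces any vertex-guard cover to use at least $3k+1$ guards. The lower bound will be established by a witness argument: I will identify $3k+1$ target points $w_1, \ldots, w_{3k+1}$ in the free space such that any single vertex guard (placed at a hole corner and oriented along an adjacent wall with $180^{\circ}$ vision) sees at most one $w_j$, which immediately gives the bound.

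First I would take $P$ to be a long, thin axis-aligned rectangle and arrange the $k$ holes in a chain across its interior, giving each hole a small, uniform tilt $\varepsilon > 0$ relative to the axes. Because every hole is tilted, no wall is axis-parallel; each vertex guard therefore has four useful half-plane orientations, each bounded by a line of slope $\pm \tan\varepsilon$ or $\pm \cot\varepsilon$. The spacings and tilt will be chosen so that the chain creates three thin ``pockets'' of free space associated with each rectangle---two in the narrow gaps to the preceding and following hole, and one above or below---plus one final pocket near a corner of $P$ that sits behind the entire chain.

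The technical heart of the proof is to show that each pocket $w_j$ is visible only from one specific hole vertex $v_j$ under one specific orientation, and that the $3k+1$ resulting pairs $(v_j, \text{orientation})$ are pairwise distinct. Distinctness reduces to a sidedness check: for each candidate placement $(v,\pi)$, I would verify that the pocket $w_j$ reached from $(v,\pi)$ lies strictly on the seen side of the bounding line, while every other pocket either lies on the unseen side of $\pi$ or is occluded by an intervening tilted wall. The role of $\varepsilon$ is to rule out degenerate alignments: without the tilt, a single axis-aligned half-plane could sweep up several pockets at once, but the tilt breaks this by a strict linear inequality. The extra $(3k+1)$-st target is then placed so that every vertex whose half-plane could contain it has already been forced to cover a pocket among the first $3k$.

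The main obstacle will be the \emph{global} sidedness check rather than the local one: one must rule out that a cleverly oriented guard on some hole $B_i$ sees a pocket associated with a distant hole $B_j$ through gaps in the chain. I would control this with a packing/blocking argument, bounding the width of each gap as a function of $\varepsilon$ and the hole dimensions so that every long sightline across the chain is intercepted by a tilted wall of an intermediate hole. Pinning down these blocking inequalities, together with a careful figure, is the bulk of the work; the matching sufficiency conjecture $3k+1$ is then supported by observing that the construction above is essentially tight---each rectangle admits a three-guard cover of its own immediate neighborhood, and a single guard suffices for the residual corner.
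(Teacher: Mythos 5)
Your overall strategy---a chained arrangement of tilted rectangles in which visibility is locally confined, with the bound extracted by counting pairwise-independent demands---is the same in spirit as the paper's proof, which builds a nested sequence where each $B_i$ lies in the span of all $B_j$, $j<i$, and no vertex of a non-adjacent hole sees any edge of $B_i$. The paper then charges $2$ guards to each of the $k-1$ gaps between consecutive holes (no single vertex guard sees the whole gap, and no guard outside the two adjacent holes sees any of it), $1$ guard to the ``hidden'' left wall of each hole, and $3$ more to the exposed top and right walls of $B_1$ and the bottom wall of $B_k$, giving $2(k-1)+k+3=3k+1$. Your witness-point formalization is a clean way to package exactly this, and your instinct that the tilt is what prevents a single half-plane guard from serving several demands at once matches the paper's construction.

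There is, however, a concrete gap in your accounting. As described, your witnesses number $3k-1$, not $3k+1$: the two ``gap'' pockets per rectangle are shared between consecutive holes, so the $k-1$ gaps contribute only $2(k-1)$ witnesses; adding one ``above or below'' pocket per rectangle gives $k$ more, and your single terminal pocket brings the total to $2(k-1)+k+1=3k-1$. The end rectangles are where the deficit lies: $B_1$ has no preceding gap and $B_k$ no following gap, so each loses a pocket relative to the interior holes, and one terminal witness does not compensate. The paper recovers the difference by observing that three outer walls of the extremal holes (top and right of $B_1$, bottom of $B_k$) are each visible only from vertices of their own hole and from orientations that serve no other demand, yielding three extra independent witnesses rather than one. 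You would need to add two more witnesses of this kind (and verify their independence from the gap and side pockets) to reach $3k+1$. Separately, the ``blocking inequalities'' you defer are genuinely the bulk of the proof---the paper also leans on its figure for these---so the proposal as written is a plausible plan rather than a proof, but with the corrected count it would go through along essentially the paper's lines.
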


\begin{proof}
For the necessity part, consider the input in Figure~\ref{ncity1}, with the following properties:
\begin{enumerate}
\item $B_i$ lies within the span of $B_j \hspace{2mm}, \forall j < i$.
\item  None of the edges of $B_i$ is partially or completely visible from any vertex of $B_j, \hspace{2mm} \forall \hspace{2mm} j < i - 1$ and from any vertex of $B_m, \hspace{2mm} \forall \hspace{2mm} m > i+1$.
\item From each potential position of a vertex guard on $B_i$, the guard is able to see at most one edge of $B_{i+1}$.
\item There is no guard position on $B_i$ from where an edge of $B_{i-1}$ and an edge of $B_i$ are visible. 
\end{enumerate}

\begin{figure}[h]
\centering
\includegraphics[width = \linewidth]{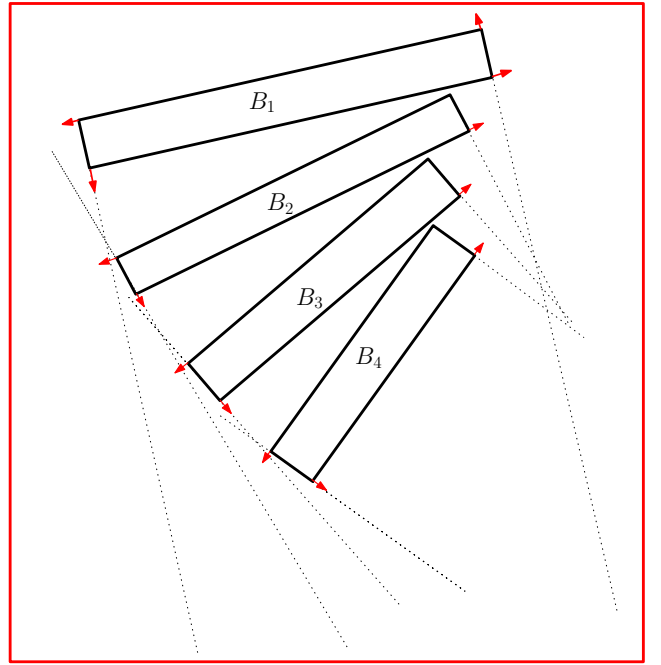}
\caption{City Structure where $3k+1$ guards are necessary to guard the polygon}
\label{ncity1}
\end{figure}

Consider the space $\wp_i$ between two consecutive holes $B_i$ and $B_{i+1}$, as shown in Figure~\ref{spaceBetween}. Because of property 2, $\wp_i$ is not visible to any guard placed on building $B_j$ where $j \in [1, i) \cup (i+1, k]$. Therefore, $\wp_i$ is only visible to the guards either placed on $B_i$ or $B_{i+1}$. It is easy to notice that there are twelve possible guard positions on $B_i$ and $B_{i+1}$ from where $\wp_i$ is visible (partially from each position, see Figure~\ref{spaceBetween}), and these guards cover three walls, one wall of $B_i$ and two walls of $B_{i+1}$. Note that these guards do not cover any other wall (partially or entirely), and the mentioned three walls are not visible (partially or entirely) to any other potential guard. Out of twelve possible guard positions, the minimum number of guards required to guard $\wp_i$ is two (either both placed on $B_i$ or one placed on $B_i$ and another on $B_{i+1}$. Therefore the space between any two consecutive holes is only guarded by the guards placed on these holes, and the minimum number of guards required to guard such space is two. 

Consider the left wall, $w_L^i$ of hole $B_i$. Because of the structure of the city, $w_L^i$ is not visible to any guard placed on $B_j$ for $j \not = i$. Hence, $w_L^i$ can only be guarded by a guard placed on $B_i$, and there are four possible guard positions from where $w_L^i$ is visible(see Figure~\ref{spaceBetween}). However, none of these guards positions cover any other wall in the city. Therefore, we need one guard to cover the left wall of each hole.

\begin{figure}[h]
\centering
\includegraphics[width = \linewidth]{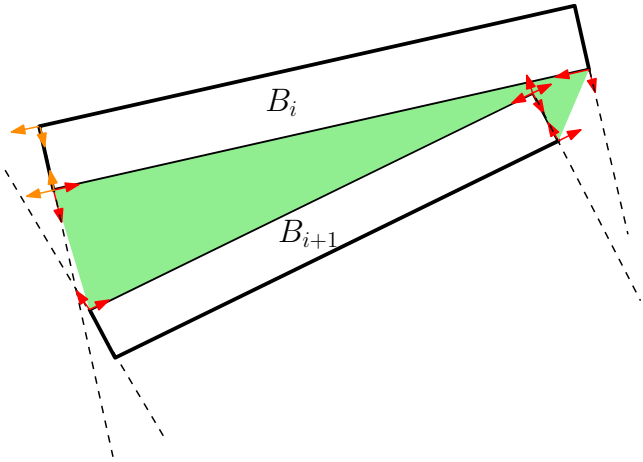}
\caption{$\wp$ is shaded in green. Potential guard positions to cover $\wp$ are shown in red and potential guard positions to cover the $w_L^i$ is shown in orange.}
\label{spaceBetween}
\end{figure}

There are $k-1$ spaces $\wp_i$ in total, between consecutive buildings ($i=1,2,\ldots,k-1$). Thus, $2(k-1)$ guards are needed to guard their union. As argued, each left edge of a hole needs an additional guard, resulting in $3k-2$ guards. The top and right edges of $B_1$ and the bottom edge of $B_k$ are not guarded, so in total, at least $3k+1$ guards are needed.

A $3k+1$ guard placement, for $k=4$, is shown in Figure~\ref{ncity1} and is obtained as follows.
We start placing guards on $B_1$. Three of its walls (left, top and right) are not visible by any potential guard placed on $B_i \hspace{2mm}, \forall i > 1$ (property 1). We place three guards to cover these walls. Consider the space $\wp_1$ between $B_1$ and $B_2$. We need two guards to cover $\wp_1$; let one of these guards be placed on $B_1$ and the other on $B_2$ (see Figure~\ref{ncity1}). After placing these two guards, all walls of $B_1$ are visible, and two walls of $B_2$ (top and right) are visible. We need an additional guard to cover the left wall of $B_2$, and this guard does not cover any other wall in the city. Consider now the space $\wp_2$ between the hole $B_2$ and $B_3$ and place two guards to cover $\wp_2$; let one of these guards be placed on $B_2$ and the other on $B_3$. After placing these two guards, all walls of $B_1$ and $B_2$ are guarded, and two walls of $B_3$ are guarded. We placed four guards on $B_1$ and three guards on $B_2$. We continue this process, and three guards are required to guard each building $B_i \hspace{2mm} \forall i > 2$. This results in $3k+1$ guards as we place three guards on each building $B_i \hspace{2mm}, \forall i \in (1, k]$, and four guards on $B_1$. Guard locations and directions are shown in Figure~\ref{ncity1}. 
\end{proof}

\noindent For guarding the city we have:

\begin{theorem}
\label{AARO-1}
$3k + 1$ vertex guards are sometimes necessary to guard a city with \textit{k} vertical buildings with rectangular base, where guards are placed only at the top vertices of the buildings. We conjecture the bounds are tight. 

\end{theorem}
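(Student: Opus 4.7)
The plan is to lift the planar lower-bound construction used in the proof of Theorem~\ref{AARO} to a 3D city so that the same counting argument yields $3k+1$. Any city guarding covers every wall and every point of the ground, and the vertical projection of such a guarding, together with the projection of the vertex guards, is a valid guarding of an instance of V2 whose holes are the building footprints. Hence every lower bound for V2 transfers to city guarding, provided the 3D geometry does not create new sightlines between non-neighboring buildings that were unavailable in the planar analysis.

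Concretely, I would take the planar configuration of Figure~\ref{ncity1} as the footprint and assign each building $B_i$ a positive height $h_i$, with heights chosen to decrease rapidly, for example $h_i = \varepsilon/2^i$ with $\varepsilon$ small compared to the minimum horizontal gap between bases. This choice makes every line of sight from a top vertex of $B_i$ to any wall of $B_j$ with $|i-j|\ge 2$ effectively horizontal, so it is still intercepted by the intervening building $B_{i\pm 1}$. Under this choice, the four visibility-blocking properties from the proof of Theorem~\ref{AARO} carry over verbatim to top-vertex guards, and the counting applies unchanged: each of the $k-1$ slabs $\wp_i$ between consecutive buildings needs at least two guards placed on $B_i$ or $B_{i+1}$, each left wall $w_L^i$ needs an additional dedicated guard on $B_i$, and the three exposed walls of $B_1$ together with the bottom wall of $B_k$ account for the remaining $+1$, summing to $2(k-1)+k+3 = 3k+1$.

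The main obstacle is the height argument. A guard on a high corner can, in principle, peer diagonally downward past a shorter obstruction, a phenomenon absent from the purely planar setting. The key lemma to prove is therefore a quantitative statement: for the footprint of Figure~\ref{ncity1} there exists an $\varepsilon > 0$ such that, with $h_i = \varepsilon/2^i$, every ray from a top vertex of $B_i$ to any wall of a building $B_j$ with $|i-j|\ge 2$ meets the wall of some intermediate $B_m$ first. Once this is established, visibility in the 3D city is dictated entirely by the planar footprint, the argument of Theorem~\ref{AARO} transfers without modification, and Theorem~\ref{AARO-1} follows. Tightness is only conjectured, and matching the bound from above would presumably require extending the divide-and-conquer strategy of Section~\ref{ss:GWAA} to the non-axis-aligned setting, which is left open.
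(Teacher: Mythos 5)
Your high-level strategy -- realize the planar configuration of Figure~\ref{ncity1} as a city whose heights are chosen so that 3D visibility collapses to the planar visibility of Theorem~\ref{AARO}, then reuse the same counting -- is exactly what the paper intends (it states, after Subproblem~\ref{axisAlignedRectangle}, that a lower bound for the hole-guarding problem transfers to the city by mapping holes to buildings). However, your specific height assignment $h_i = \varepsilon/2^i$ does not deliver the key lemma you defer, and the lemma is where the entire content of the theorem lives. First, shrinking $\varepsilon$ buys you nothing: scaling all heights by a common factor scales the height of every sightline by the same factor, so whether a ray from a top vertex of $B_i$ clears the roof of $B_{i+1}$ is invariant under the choice of $\varepsilon$ and depends only on the ratios $h_j/h_i$ and the horizontal layout. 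Second, and worse, strictly decreasing heights create precisely the failure mode you worry about: a ray leaving the top of $B_i$ at height $h_i$ toward a low point of $B_{i+2}$ has height $h_i(1-t)+zt$ at parameter $t$, and it clears the intervening building of height $h_i/2$ whenever that building's footprint is crossed before the ray has descended halfway, i.e.\ whenever $B_{i+1}$ sits in roughly the nearer half of the segment. Nothing in the construction of Figure~\ref{ncity1} guarantees this does not happen, so Property~2 of Theorem~\ref{AARO} (no vertex of $B_j$, $j<i-1$, sees any edge of $B_i$) can genuinely break, and with it the claim that each slab $\wp_i$ is visible only from guards on $B_i$ or $B_{i+1}$.

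The fix is simpler than your lemma: give all $k$ buildings the \emph{same} height $h$. Then any sightline from a top vertex to a point at height strictly below $h$ is strictly descending, so at every point where its ground projection enters the footprint of an intermediate building the ray is at height strictly less than $h$ and is absorbed by that building. Hence a 3D sightline to any sub-roof point exists if and only if the corresponding planar sightline exists, Properties~1--4 of Theorem~\ref{AARO} hold verbatim for top-vertex guards, and the count $2(k-1)+k+3=3k+1$ goes through. With that substitution your argument is the paper's argument; as written, the deferred lemma is false for your height choice and cannot be rescued by taking $\varepsilon$ small.
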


\bibliography{references}

\appendix

\end{document}